\newtheorem{definition}{Definition}[section]
\newtheorem{proposition}[definition]{Proposition}
\newtheorem{theorem}[definition]{Theorem}
\def\checkmark{\tikz\fill[scale=0.4](0,.35) -- (.25,0) -- (1,.7) -- (.25,.15) -- cycle;}
\newtheorem{example}[definition]{Example}
\newtheorem{remark}[definition]{Remark}
\newcommand{\RR}{\mathbb{R}}
\newcommand{\BB}{\mathbb{B}}
\DeclareMathOperator*{\argmax}{arg\,max}
\newcommand{\Img}{\operatorname{Im}}
\newcommand{\fix}{\operatorname{fix}}
\newcommand{\nonfix}{\operatorname{non-fix}}
\newcommand{\maj}{\operatorname{maj}}
\newcommand{\imax}[1]{{#1}\textup{-}\!\max}
\newcommand{\iargmax}[1]{{#1}\textup{-}\!\argmax}
\newcommand{\power}[1]{{\mathcal P}(#1)}
\newcommand{\ignore}[1]{}
\title{Higher-Order Game Theory\footnote{\scriptsize We thank the seminar participants at the University of Mannheim, the Dagstuhl Perspectives Workshop ``Categorical Methods at the Crossroads'', the Dagstuhl Seminar ``Coalgebraic Semantics of Reflexive Economics'', the ``Computing in Economics and Finance Conference 2014'' in Oslo, the ``Cogrow'' Workshop in Nijmegen 2014 and the ``Logics for Social Behavior'' in Den Haag 2014 for helpful comments. Hedges thanks EPSRC, grant EP/K50290X/1, for financial support. Oliva gratefully acknowledges financial support by the Royal Society through grant 516002.K501/RH/kk. Winschel and Zahn gratefully acknowledge financial support by the Deutsche Forschungsgemeinschaft (DFG) through SFB 884 ``Political Economy of Reforms''.}}
\author{
Jules Hedges, Paulo Oliva\\ 
\footnotesize School of Electronic Engineering and Computer Science, Queen Mary University London
\vspace{0.01cm}\\
Evguenia Sprits, Philipp Zahn\\ 
\footnotesize Department of Economics, University of Mannheim
\vspace{0.01cm}\\
Viktor Winschel\\ 
\footnotesize Department of Management, Technology and Economics, ETH Z\"urich
}
\date{\today}
\begin{document}
\maketitle
\begin{abstract}
\footnotesize
In applied game theory the motivation of players is a key element. It is encoded in the payoffs of the game form and often based on utility functions. But there are cases were formal descriptions in the form of a utility function do not exist. In this paper we introduce a representation of games where players' goals are modeled  based on so-called \emph{higher-order functions}. Our representation provides a general and powerful way to mathematically summarize players'
intentions.  In our framework utility functions as well as preference relations are special
cases to describe players' goals.  We show that in higher-order functions formal descriptions of players may still exist where utility functions do not using a classical example, a variant of Keynes' beauty contest. We also show that equilibrium conditions based on Nash can be easily adapted to our framework. Lastly, this framework serves as a stepping stone to powerful tools from computer science that can be usefully applied to economic game theory in the future such as computational and computability
aspects.
\end{abstract}
\textbf{JEL codes:} C0, D01, D03, D63, D64\\
\textbf{Keywords:}  behavioral economics, foundations of game theory, process orientation, computable economics, beauty contest, coordination and differentiation goals, higher order functions, quantifiers, selection functions

\section{Introduction}

A key ingredient of game theory is the \emph{motivation} or \emph{goal} of each individual player. 
The standard approach is to \emph{encode} the goal of the player in the payoffs assigned to the different outcomes of the interaction. 
These payoffs, in turn, are often based on utility functions, as for instance in political economics where a politician maximizes his vote share, or in industrial organization 
where a company maximizes profits. Yet, not all motivations can be
summarized easily or in the most insightful way by utility functions.

Consider the following example of a variant of Keynes' beauty contest \cite{Keynes1936}. There are three players, the judges $J=\{J_1, J_2, J_3\}$. Each judge simultaneously votes for one of two contestants, $A$ or $B$. The winner is determined by the simple majority rule of type $\operatorname{maj}: X\times X\times X \rightarrow X$, where $X = \{ A , B \}$. Note that $X$ in this case is both the set of possible choices for the three judges and the space of possible outcomes.

Suppose that $J_1$ prefers $A$ over $B$. Judges $2$ and $3$, in contrast, are not interested in the outcomes per se but only in voting for the winner of the contest. As long as the winning contestant is the one they voted for they are happy. How would we describe such a game? In particular, how would we assign payoffs to the outcomes of the contest? There are only two possible results: either $A$ wins or $B$ wins. Obviously, these outcomes do not contain sufficient information to model the payoffs of players $2$ and $3$, as both judges do not care about $A$ or $B$ but only whether there is a majority for the candidate they have voted for. 

One solution is to consider an extended outcome space, i.e. the complete voting profile of all players. Then, judge $2$ and $3$ prefer profiles where they are in the majority over profiles where they are in the minority. Table \ref{tab:voting} depicts the payoff matrix that represents this game. 

\begin{table}[!htb]\label{tab:voting}
     \begin{minipage}{.5\linewidth}
                  \centering
                  \begin{tabular}{l|ll}
                    & A & B   \\
                    \hline
                  A & 1, 1, 1 & 1, 0, 1 \\
                  B & 1, 1, 1 & 0, 1, 0 
                  \end{tabular}
                  \footnotesize
                  \emph{$A$} 
      \end{minipage}%
      \begin{minipage}{.5\linewidth}
                 \centering
                 \begin{tabular}{l|ll}
                    & A & B  \\
                    \hline
                  A & 1, 1, 0 & 0, 1, 1 \\ 
                  B & 0, 0, 1 & 0, 1, 1 \\
                 \end{tabular}
                 \footnotesize
                 \emph{$B$}
      \end{minipage} 
      \caption{Voting Contest}
\end{table}

The crucial point is that all the individual outcomes have to be ranked by hand and payoffs are assigned accordingly. The goals of the three players are \emph{implicitly encoded} in the payoffs. A succinct
formal summary of goals like a utility function is not provided.  

Now, ranking the extended outcome space is simple and fast in this example. But what if we considered a game consisting of 5 players or 20 players with players having various goals? Calculating the payoffs will become complicated, slow or error prone. More importantly, \emph{interpreting} (or \emph{decoding}) the outcome of such a game in terms of the original motivation becomes complicated. 

In this paper we introduce a representation of games that formally summarizes the goals of agents, such as in the beauty contest described above, in a general and powerful way. Our approach builds  on higher-order functions (also called functionals or operators) and originates in game-theoretic approaches to proof theory \cite{escardo_sequential_2011,escardo10a}.

The core concept is that we model players' goals as \emph{quantifiers}, i.e. higher-order functions of type $(X \to R)\to R$, that take functions from the space $X\to R$ as an argument, where $X$ is the set of choices and $R$ is the set of possible outcomes. A corresponding notion is that of a \emph{selection function}, i.e. a higher-order function of type $(X \to R) \to X$. If we think of functions of type $X \to R$ as a \emph{game context}, then quantifiers map game contexts to preferred outcomes, while selection functions map game contexts to preferred moves.

The operators normally used in game theory are examples of quantifiers and selection functions of a particular type. They are the  
$\max \colon (X \to \RR) \to \RR$ and $\argmax \colon (X \to \RR) \to X$ operators.

One point of view is that our approach generalizes these operators, and allows the use of different operators to model players, i.e. we can describe behavior in terms of operators other than $\max$.
Moreover, the outcome space in our formulation of goals can have any structure and is not restricted in order to be representable by rational preferences\footnote{In a companion paper \cite{Hedges_et_al_2015_decisions} where we focus on decision problems, we give a detailed description in which way quantifiers and selection functions are a common structure shared by utility  maximization, preference relations as well as behavioral approaches.}.

But most importantly, since quantifiers and selection functions take functions as input, context-dependent goals about the interaction itself, as in the Keynes example, can be easily described. For instance, in the case of the voting contest, judges $2$ and $3$ can be directly summarized as a fixed point operator on the outcome function. It is this flexibility  of our approach that allows for a high-level description of individual goals in games. 

Economic situations often encompass high-level goals. Consider, for instance, coordination or differentiation. Typically, we would translate these goals into numerical payoffs. While in some situations payoffs may be a natural representation of the original goals, in other cases this translation blurs the properties of the outcomes and alternatives to decide upon and thereby obfuscates the actual motivations of the players. By using quantifiers and selection functions we gain a more expressive description that is closer to the economic situation we want to encode as well as easier to adapt to alternative specifications of the game. 

We will also show that the standard Nash equilibrium concept can be seamlessly generalized to our higher-order representation of games. Based on Nash, we introduce the \emph{quantifier} and \emph{selection equilibrium} concepts. We prove that quantifier and selection equilibria coincide in the case of the classical max and argmax operators, but that, generally, this equivalence does not hold: For other quantifiers and selection functions the two different equilibrium concepts yield different sets of equilibria. We give a sufficient condition for the two notions to coincide based on the notion of \emph{closedness} of selection functions. We prove that in general, the selection equilibrium is an equilibrium refinement of the quantifier equilibrium, and present evidence that for games based on non-closed selection functions, the selection equilibrium is the appropriate solution concept (Section \ref{sec:fixpoint}).

We consider several variants of the voting contest introduced above in order to acquaint the reader with our tools. We also provide several textbook examples such as the Battle of the Sexes, and show that the underlying structure of such examples is accurately represented in this higher-order setting.

In this paper we restrict ourselves to simple models in order to highlight the features of our framework. 
As there are an infinite number of higher-order functions like the max operator,
we can only peek into the additional possibilities of higher-order functions and leave the applications for future research.

\section{Players, Quantifiers and Selection Functions}
\label{sec:quantifiers}

A \emph{higher order function} (or \emph{functional}) is a function whose domain is itself a set of functions. 
Given sets $X$ and $Y$ we denote by $X \to Y$ the set of all functions with domain $X$ and codomain $Y$. 
A higher order function is therefore a function $f : (X \to Y) \to Z$ where $X$, $Y$ and $Z$ are sets.

\begin{example} There are familiar examples of higher-order functions in economics. The $\max$ operator has type
\[ \max \colon (X \to \RR) \to \RR \]
returning the maximum value of a given real-valued function $p \colon X \to \RR$. One will normally write $\max p$ as $\max_{x \in X} p(x)$. A corresponding operator is $\argmax$ which returns all the points where the maximum is attained, i.e.
\[ \argmax \colon (X \to \RR) \to \power{X} \]
using $\power{X}$ for the power-set of $X$. As opposed to $\max$, $\argmax$ is naturally a multi-valued function. Integration has a similar type to $\max$
\[ \int \colon (X \to \RR) \to \RR \]
except that one will only consider measurable sets $X$, and $\int p$ is normally written as $\int p(x) dx$. Of a slightly different nature is the fixed point operator
\[ \fix \colon (X \to X) \to \power{X} \]
which calculates all the fixed points of a given self-mapping $p \colon X \to X$, or the anti-fixed-point operator which calculates all points that are not fixed points. 
\end{example}

In this section we define two particular classes of higher-order functions: \emph{quantifiers} and \emph{selection functions}. We first establish that these functions provide means to represent agents' goals in an abstract and general way. In particular,  these notions usefully generalize utility maximization and preference relations. 

\subsection {Game Context}
\label{sec:context}
To define players' goals we first need a structure that represents the economic situation on which these goals are based. For this end we introduce the concept of a \emph{game context} which summarizes information of the strategic situation from the perspective of a single player.

\begin{definition}[Game context] For a player $\mathcal A$ choosing a move from a set $X$, having in sight a final outcome in a set $R$, we call any function $p : X \to R$ a possible \emph{game context} for the player $\mathcal A$. 
\end{definition}

Assuming that the player $\mathcal A$ is \emph{deterministic} (or \emph{predictable}) in the sense that his moves are not dependent on chance\footnote{This is without loss of generality, because we can always allow the set of outcomes to be a set of probability distributions.} We can think of functions $p \colon X \to R$ as representing any economic \emph{situation} or \emph{context} we can put the player into. A player's goals will be defined on such functions. 

\begin{example} Consider our first example from the Introduction: three judges are voting simultaneously for one of two contestants $X = \{ A , B \}$. The winner is decided by the majority rule $\maj \colon X \times X \times X \to X$. In a setting where judges 1 and 3 have fixed their choices, say $x_1 = A$ and $x_3 = B$, this gives rise to a game context for the second judge, namely
\[ x_2 \quad \mapsto \quad \maj(A, x_2, B) \]
which is in fact the identity function since $\maj(A, x_2, B) = x_2$. If, on the other hand, judges 1 and 3 had fixed their choices as $x_1 = x_3 = A$, the game context for player 2 would be the constant function $x_2 \mapsto A$, since his vote does not influence the outcome. 
\end{example}

One can think of the game context $p : X \to R$ as an \emph{abstraction} of the actual game context that is determined by knowing the rules of the game, and how each opponent played. Notice that in the example above the game context which maps $A$ to $B$, and $B$ to $A$, never arises. It would arise, however, if one replaced the majority rule by the minority one. The lesson here is that several concrete contexts can give rise to the same abstract game context, and there might abstract game contexts that do not arise in a given particular game.

It might seem like we are losing too much information by adopting such an abstraction of a game context. What we wish to illustrate through several examples is that such a level of abstraction is sufficient for modeling players' individual motivations and goals. And precisely because it is abstract and it captures the strategic context of a player as if it was a single decision problem, it allows for a description of the players' intrinsic motivations, irrespective of how many players are around, or which particular game is being played. 

\subsection {Quantifiers and Selection Functions}
\label{sec:quantifier_preference_util}

Suppose now that $\mathcal A$ makes a decision $x \in X$ in a game context $p \colon X \to R$. The player will consider some outcomes to be \emph{good} (or \emph{acceptable}), and other outcomes to be bad. In general, we are going to allow the set of outcomes that the player considers good to be totally arbitrary. 

\begin{definition}[Quantifiers, \cite{escardo10a,escardo_sequential_2011}] Let $\power{R}$ denote the power-set of the set of outcomes $R$. Higher-order functions $$\varphi : (X \to R) \to \power{R}$$ from contexts $p : X \to R$ to sets of outcomes $\varphi(p) \subseteq R$ are called \emph{quantifiers}\footnote{The terminology comes from the observation that the usual existential $\exists$ and universal $\forall$ quantifiers of logic can be seen as operations of type $(X \to \BB) \to \BB$, where $\BB$ is the type of booleans. Mostowski \cite{Mostowski_1957} also called arbitrary functionals of type $(X \to \BB) \to \BB$ \emph{generalized quantifiers}. We are choosing to generalize this further by replacing the booleans $\BB$ with an arbitrary type $R$, and allowing for the operation to be multi-valued.}.
\end{definition}

We wish to model players $\mathcal A$ as quantifiers $\varphi_{\mathcal A} \colon (X \to R) \to \power{R}$. We think of $\varphi(p)$ as the set of outcomes the player $\mathcal A$ considers preferable in a given game context $p \colon X \to R$. It is crucial to recognize that this is a \emph{qualitative} description of a player, in the sense that an outcome is either preferable or it is not, with no numerical measure attached.

The classical example of a quantifier is \emph{utility maximisation}, with the outcome set $R = \RR^n$ consisting of $n$-tuples of real-valued payoffs. If we denote by $\pi_i \colon \RR^n \to \RR$ the $i$-projection, then the utility of the  $i^{th}$ player is $\pi_i(r)$.
Hence, given a game context $p : X \to \mathbb R^n$, the good outcomes for the $i^{th}$ player are precisely those for which the  $i^{th}$ coordinate, i.e. his utility, is maximal. This quantifier is given by
\[ \imax{i}(p)= \{ r \in \Img(p)\mid r_i \geq (\pi_i \circ p)(x') \text{ for all } x' \in X \} \]
where $\Img(p)$ denotes the image of the function $p \colon X \to R$, and $\pi_i \circ p$ denotes the composition of $p$ with the $i$-th projection. 

Analogously to utility maximization, players' choices motivated by preference relations can also be easily represented.  Suppose $R$ is the set of possible final outcomes, 
the context $p \colon X \to R$ maps actions to outcomes, and a player $i$ has a total order relation $\succeq_i$ on $R$, so that $x \succeq_i y$ means that player $i$ prefers the outcome $x$ to $y$. This quantifier is given by:
\[ \imax{\succeq_i}(p)= \{ r \in \Img(p)\mid r \succeq_i p(x') \text{ for all } x' \in X \} \]

Just as a quantifier tells us which outcomes a player considers good in each given context, one can also consider the higher-order function that determines which \emph{moves} a player considers good in any given context. 

\begin{definition}[Selection functions] A \emph{selection function} is any function of the form\footnote{In the computer science literature where selection functions have been considered previously \cite{escardo10a,escardo_sequential_2011} the focus was on single-valued ones. However, as multi-valued selection functions are extremely important in our examples we have adapted the definitions accordingly.}
\[ \varepsilon : (X \to R) \to \mathcal P (X). \]
\end{definition}

Similarly to quantifiers, the canonical example of a selection function is maximising one of the coordinates in $\mathbb R^n$, defined by
\[ \iargmax{i}(p)= \{ x \in X \mid (\pi_i \circ p)(x) \geq (\pi_i \circ p)(x') \text{ for all } x' \in X \}. \]
Even in one-dimensional $\mathbb R^1$ the $\argmax$ selection function is naturally multi-valued: a function may attain its maximum value at several different points.

\subsection {Closed Selection Functions}

The selection function one obtains from utility functions and preference relations in the previous subsection are examples of what we call  \emph{closed selection functions}.

\begin{definition}[Closedness] A selection function $\varepsilon : (X \to R) \to \mathcal P (X)$ is said to be \emph{closed} if whenever $x \in \varepsilon(p)$ and $p(x) = p(x')$ then $x' \in \varepsilon(p)$.
\end{definition}

Intuitively, a closed selection function is one which chooses optimal moves only based on the outcomes they generate. Two moves that lead to the same outcome are therefore indistinguishable, they are either both good or bad. It is easy to see that the selection function $\argmax(p)$ is closed. Agents modelled via closed selection functions do not put any preferences on moves that lead to identical outcomes. 

An example of non-closed selection function is the fixpoint operator
\[ \fix : (X \to X) \to \mathcal P (X). \]
Recall that a fixpoint of a function $f : X \to X$ is a point $x \in X$ satisfying $f(x) = x$. When the set of moves is equal to the set of outcomes $R = X$ there is a selection function whose good moves are precisely the fixpoints of the context. If the context has no fixpoint then the player will be equally satisfied with any outcome. Therefore such a selection function can be defined as
\[
\fix (p)= \begin{cases}
\{ x \in X \mid p(x) = x \} &\text{ if $p(x) = x$ for some $x \in X$ } \\
X &\text{ otherwise}
\end{cases}
\]
Clearly $\fix(\cdot)$ is non-closed, since we might have two points $x$ and $x'$ which both map to $x$ (i.e. $p(x) = p(x') = x$) so that $x$ is a fixed point but $x'$ might not be. We will see in Section \ref{sec:keynesian-beauty-contest} that such a fixpoint selection function perfectly models the agent in the Keynesian beauty contest whose sole goal is to vote for the winner.

\subsection{Relating Quantifiers and Selection Functions}

It is clear that quantifiers and selection functions are closely related. One important relation between them is that of \emph{attainment}. Intuitively this means that the outcome of a good move should be a good outcome. 

\begin{definition} Given a quantifier $\varphi : (X \to R) \to \mathcal P (R)$ and a selection function $\varepsilon : (X \to R) \to \mathcal P (X)$, we say that $\varepsilon$ \emph{attains} $\varphi$ iff for all contexts $p : X \to R$ it is the case that
\[ x \in \varepsilon (p) \implies p (x) \in \varphi (p). \]
\end{definition}

One can check that the attainability relation holds between the quantifier $\imax{i}$ and the selection function $\iargmax{i}$. Any point where the maximum valued is attained will evaluate to the maximum value of the function. More interestingly, the fixpoint quantifier is also a selection function, and it attains itself since
\[ x \in \fix (p)\implies p (x) \in \fix(p). \]

Let us briefly reflect on the game theoretic meaning of attainability. Suppose we have a quantifier $\varphi$ which describes the outcomes that a player considers to be good. The quantifier might be \emph{unrealistic} in the sense that it has no attainable good outcome. For example, a player may consider it a good outcome if he received a million dollars, but in his current context there may just not be a move available which will lead to this outcome. Given a context $p$, the set of attainable outcomes is precisely the image of $p$. 

Given any selection function $\varepsilon \colon (X \to R) \to \power{X}$, we can form the smallest quantifier which it attains as follows.

\begin{definition} \label{over-sel} Given a selection function $\varepsilon \colon (X \to R) \to \power{X}$, define the quantifier $\overline{\varepsilon} \colon (X \to R) \to \power{R}$ as
\[ \overline{\varepsilon} (p)= \{ p (x) \mid x \in \varepsilon (p)\}. \]
\end{definition}

Conversely, given any quantifier $\varphi \colon (X \to R) \to \power{R}$ we can define a corresponding selection function as follows. 

\begin{definition} \label{over-quant} Given a quantifier $\varphi \colon (X \to R) \to \power{R}$, define the selection function $\overline{\varphi} \colon (X \to R) \to \power{X}$ as
\[ \overline{\varphi}(p) = \{ x \mid p(x) \in \varphi(p) \}. \]
\end{definition}

We use the same overline notation, as it will be clear from the setting whether we are applying it to a quantifier or a selection function.
One can consider translating quantifiers into selection functions and back into quantifiers, or conversely.

\begin{proposition} \label{prop:closure} For all $p \colon X \to R$ we have $\overline{\overline{\varphi}}(p) = \varphi(p)$ and $\varepsilon(p) \subseteq \overline{\overline{\varepsilon}}(p)$.
\end{proposition}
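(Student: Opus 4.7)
The plan is to unwind the definitions in each case and reduce each statement to a straightforward set-theoretic identity; both assertions are purely formal once one writes out what $\overline{\,\cdot\,}$ does in sequence.

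For the first equality, I would start with $\overline{\overline{\varphi}}(p)$ and apply Definitions \ref{over-sel} and \ref{over-quant} in turn, obtaining
\[ \overline{\overline{\varphi}}(p) = \{\, p(x) \mid x \in \overline{\varphi}(p) \,\} = \{\, p(x) \mid p(x) \in \varphi(p)\,\}, \]
which is manifestly $\varphi(p) \cap \Img(p)$. The $\subseteq$ inclusion into $\varphi(p)$ is then immediate. For the $\supseteq$ direction, the argument implicitly relies on the convention, consistent with every example in the paper (e.g.\ $\imax{i}$ and $\imax{\succeq_i}$, both of which are defined as subsets of $\Img(p)$), that a quantifier satisfies $\varphi(p) \subseteq \Img(p)$: given $r \in \varphi(p)$, pick any $x$ with $p(x) = r$; then $p(x) = r \in \varphi(p)$ gives $x \in \overline{\varphi}(p)$, hence $r = p(x) \in \overline{\overline{\varphi}}(p)$. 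I would flag this hypothesis at the start of the proof rather than bury it in the calculation.

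For the second inclusion, the same unfolding yields
\[ \overline{\overline{\varepsilon}}(p) = \{\, x \mid p(x) \in \overline{\varepsilon}(p)\,\} = \{\, x \mid \exists\, x' \in \varepsilon(p) \text{ with } p(x) = p(x')\,\}, \]
and taking $x' = x$ shows at once that any $x \in \varepsilon(p)$ lies in $\overline{\overline{\varepsilon}}(p)$.

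The only non-bookkeeping observation I would add is an explanation of the asymmetry between the two parts: the reverse inclusion $\overline{\overline{\varepsilon}}(p) \subseteq \varepsilon(p)$ is, reading the displayed characterisation above, exactly the condition that $p(x) = p(x')$ and $x' \in \varepsilon(p)$ force $x \in \varepsilon(p)$, i.e.\ closedness. So $\overline{\overline{\varepsilon}}$ is informally the closure of $\varepsilon$, and equality holds precisely for closed selection functions, whereas on the quantifier side there is no analogous move-level information to lose, which is why the first round-trip is an equality. The main obstacle is purely expository rather than mathematical: making the $\varphi(p) \subseteq \Img(p)$ convention explicit so that the first equality is not read as $\varphi(p) \cap \Img(p) = \varphi(p)$ requiring justification.
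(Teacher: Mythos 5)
Your proof follows the same route as the paper's (the paper merely unfolds Definitions \ref{over-sel} and \ref{over-quant} and only writes out the inclusion $\varepsilon(p) \subseteq \overline{\overline{\varepsilon}}(p)$, dismissing the rest as easy), and your treatment of the second inclusion is identical to the paper's. The genuinely valuable part of your write-up is the point you flag for the first equality: it is not merely an expository nicety but a necessary correction. Unfolding the definitions gives $\overline{\overline{\varphi}}(p) = \{\, p(x) \mid p(x) \in \varphi(p) \,\} = \varphi(p) \cap \Img(p)$, and the paper's definition of a quantifier allows $\varphi(p)$ to be an arbitrary subset of $R$ --- indeed the paper explicitly discusses ``unrealistic'' quantifiers whose good outcomes are not attainable, i.e.\ not in $\Img(p)$. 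For such a $\varphi$ (take $X = \{a\}$, $R = \{0,1\}$, $p(a) = 0$, $\varphi(p) = \{1\}$; then $\overline{\varphi}(p) = \emptyset$ and $\overline{\overline{\varphi}}(p) = \emptyset \neq \varphi(p)$) the stated equality fails, so the proposition is only true under the convention $\varphi(p) \subseteq \Img(p)$ that you make explicit. The paper's proof, by omitting this case entirely, silently assumes it. Your closing remark tying the failure of the reverse inclusion $\overline{\overline{\varepsilon}}(p) \subseteq \varepsilon(p)$ exactly to closedness is also correct and anticipates the proposition that follows in the paper.
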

\noindent {\bf Proof}. These are easy to derive. Let us briefly outline $\varepsilon(p) \subseteq \overline{\overline{\varepsilon}}(p)$. Suppose $x \in \varepsilon(p)$ is a good move in the game context $p \colon X \to R$. By Definition \ref{over-sel} we have that $p(x) \in \overline{\varepsilon}(p)$. Finally, by Definition \ref{over-quant} we have that $x \in \overline{\overline{\varepsilon}}(p)$. $\Box$ \\

The proposition above shows that on quantifiers the double-overline operation calculates the same quantifier we started with. However, on selection functions the mapping $\varepsilon \mapsto \overline{\overline{\varepsilon}}$ can be viewed as a \emph{closure} operator\footnote{Note that we might have a strict inclusion $\varepsilon(p) \subset \overline{\overline{\varepsilon}}(p)$ in case we have $x_1 \neq x_2$, with $x_1 \in \varepsilon(p)$ and $x_2 \not\in \varepsilon(p)$ but $p(x_1) = p(x_2)$. }. Intuitively, the new selection function $\overline{\overline{\varepsilon}}$ will have the same good \emph{outcomes} as the original one, but it might consider many more \emph{moves} to be good as well, as it does not distinguish moves which both lead to equally good outcomes.

\begin{proposition} A selection function $\varepsilon$ is \emph{closed} if and only if $\varepsilon = \overline{\overline{\varepsilon}}$.
\end{proposition}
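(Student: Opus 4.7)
The plan is a direct definitional unfolding in both directions, with essentially no technical obstacle. First I would compute $\overline{\overline{\varepsilon}}(p)$ explicitly by chaining Definitions \ref{over-sel} and \ref{over-quant}: we have $\overline{\varepsilon}(p) = \{p(x) \mid x \in \varepsilon(p)\}$, so
\[ \overline{\overline{\varepsilon}}(p) \;=\; \{x' \in X \mid p(x') \in \overline{\varepsilon}(p)\} \;=\; \{x' \in X \mid \exists\, x \in \varepsilon(p),\; p(x') = p(x)\}. \]
In other words, $\overline{\overline{\varepsilon}}(p)$ is exactly the saturation of $\varepsilon(p)$ under the equivalence relation ``$p(\cdot) = p(\cdot)$'' on $X$. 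Once this formula is in hand, the equivalence becomes a triviality.

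For the forward direction, assume $\varepsilon$ is closed. By Proposition \ref{prop:closure} we already have $\varepsilon(p) \subseteq \overline{\overline{\varepsilon}}(p)$, so only the reverse inclusion needs proof. Take any $x' \in \overline{\overline{\varepsilon}}(p)$; by the formula above, there is some $x \in \varepsilon(p)$ with $p(x) = p(x')$, and closedness immediately yields $x' \in \varepsilon(p)$.

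For the reverse direction, assume $\varepsilon = \overline{\overline{\varepsilon}}$ and verify closedness directly. Given $x \in \varepsilon(p)$ and $x' \in X$ with $p(x) = p(x')$, the displayed formula for $\overline{\overline{\varepsilon}}(p)$ puts $x'$ in $\overline{\overline{\varepsilon}}(p)$, which by hypothesis equals $\varepsilon(p)$.

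The only place any care is needed is the initial unfolding step; once $\overline{\overline{\varepsilon}}(p)$ is rewritten as the $p$-fibre-saturation of $\varepsilon(p)$, closedness and the fixed-point condition $\varepsilon = \overline{\overline{\varepsilon}}$ are manifestly the same property. I expect the whole proof to fit in a few lines.
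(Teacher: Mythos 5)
Your proposal is correct and follows essentially the same route as the paper's proof: both directions unfold Definitions \ref{over-sel} and \ref{over-quant}, and the forward direction invokes Proposition \ref{prop:closure} for the inclusion $\varepsilon(p) \subseteq \overline{\overline{\varepsilon}}(p)$ so that only the reverse containment needs checking. Writing $\overline{\overline{\varepsilon}}(p)$ explicitly as the $p$-fibre saturation of $\varepsilon(p)$ up front is a tidy presentational choice, but the underlying argument is the one in the paper.
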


\begin{proof} Assume first that $\varepsilon$ is closed, i.e. 
\begin{itemize}
	\item[(i)] $x \in \varepsilon(p)$ and $p(x) = p(x')$ then $x' \in \varepsilon(p)$.
\end{itemize}
By Proposition \ref{prop:closure} is it enough to show that if $x' \in \overline{\overline{\varepsilon}}(p)$ then $x' \in \varepsilon(p)$. Assuming $x' \in \overline{\overline{\varepsilon}}(p)$, and by Definition \ref{over-quant} we have
\begin{itemize}
	\item[(ii)] $p(x') \in \overline{\varepsilon}(p)$.
\end{itemize}
By Definition \ref{over-sel}, (ii) says that $p(x') = p(x)$ for some $x \in \varepsilon(p)$. By (i) it follows that $x \in \varepsilon(p)$. \\
Conversely, assume that $\varepsilon = \overline{\overline{\varepsilon}}$ and that $x \in \varepsilon(p)$ and $p(x) = p(x')$. We wish to show that $x' \in \varepsilon(p)$. Since $x \in \varepsilon(p)$ then $p(x) \in \overline{\varepsilon}(p)$. But since $p(x) = p(x')$ we have that $p(x') \in \overline{\varepsilon}(p)$. Hence, $x' \in \overline{\overline{\varepsilon}}(p)$. But since $\varepsilon = \overline{\overline{\varepsilon}}$ it follows that $x' \in \varepsilon(p)$.
\end{proof}

\begin{remark} The theory of quantifiers and selection functions has been developed in stages. Single-valued selection functions and quantifiers in the general form used here first appeared in \cite{escardo10a}, unifying earlier definitions in proof theory and type theory. That is also where the connection between selection functions and game theory was first established. Multi-valued quantifiers appeared in \cite{escardo_sequential_2011}, which allows us to capture more important examples in a more natural way. The connections between selection functions and game theory were explored in more depth in \cite{escardo12} and \cite{hedges13}, and the latter contains the definition of attainment given here. Finally \cite{hedges14a} contains the terminology \emph{context}.
\end{remark}

\section{Higher-Order Games} 
\label{sec:games}

Quantifiers and selection functions as introduced in the previous section can be used to model games. 
In this section we define higher-order games. In the next section we introduce suitable equilibrium concepts.

\begin{definition}[Higher-Order Games] \label{def-game-general} An $n$-players game ${\cal G}$, with a set $R$ of outcomes and sets $X_i$ of strategies for the  $i^{th}$ player, consists of an $(n+1)$-tuple ${\cal G} = (\varepsilon_1, \ldots, \varepsilon_n, q)$ where
\begin{itemize}
\item for each player $1 \leq i \leq n$, $\varepsilon_i : (X_i \to R) \to \mathcal P (X_i)$ is a selection function describing the $i$-th player's preferred moves in each game context.
\item $q : \prod_{i =1 }^n X_i \to R$ is the outcome function, i.e., a mapping from the strategy profile to the final outcome.
\end{itemize}
\end{definition}

Intuitively, we think of the outcome function $q$ as representing the `situation', or the rules of the game, while we think of the selection functions as describing the players. Thus we can imagine the same player in different situations, and different players in the same situation. This allows us to decompose a modelling problem into a global and a local part: modelling the situation and modelling the players.

\begin{remark}[Classical Game \cite{Osborne1994}] \label{normal-form-remark} The ordinary definition of a normal form game of $n$-players with standard payoff functions is a particular case of Definition \ref{def-game-general} when
\begin{itemize}
\item for each player $i$ the set of strategies is $X_i$,
\item the set of outcomes $R$ is $\mathbb R^n$, modelling the vector of payoffs obtained by each player,
\item the selection function of player $i$ is $\iargmax{i} \colon (X_i \to \RR^n) \to {\mathcal P}(X_i)$, i.e. $\argmax$ with respect to the  $i^{th}$ coordinate, representing the idea that each player is solely interested in maximising their own payoff,
\item the  $i^{th}$ component of the outcome function $q \colon \prod_{i=1}^n X_i \to \RR^n$ can be viewed as the payoff function $q_i \colon \prod_{j =1 }^n X_j \to \RR$ of the  $i^{th}$ player.
\end{itemize}
\end{remark}

\subsection{Keynesian Beauty Contest}
\label{sec:keynesian-beauty-contest}

Consider the following voting contest as already outlined in the introduction.

\begin{example} [The voting contest] \label{keynesian-ex1} 
There are three players, the judges $J=\{J_1, J_2, J_3\}$, who each vote for one of two contestants $A$ or $B$. The winner is determined by the simple majority rule.
\end{example}

We analyse two instances of this game with different motivations of players while keeping the overall structure of the game fixed.

\paragraph{Classical game.} 

First suppose that the judges rank the contestants according to a preference ordering. For example, say that judges 1 and 2 prefer $A$ and judge 3 prefers $B$. Table \ref{tab:classical} depicts a payoff matrix which encodes this situation, including the rules for choosing a player (majority) and the goals of each individual player.

\begin{table}[!htb]\label{tab:classical}
      \begin{minipage}{.5\linewidth}
                  \centering
                  \begin{tabular}{l|ll}
                    & A & B   \\
                    \hline
                  A & 1, 1, 0  & 1, 1, 0 \\
                  B & 1, 1, 0 & 0, 0, 1 
                  \end{tabular}
                  \footnotesize
                  \emph{$A$} 
      \end{minipage}%
      \begin{minipage}{.5\linewidth}
                 \centering
                 \begin{tabular}{l|ll}
                    & A & B  \\
                    \hline
                  A & 1, 1, 0 & 0, 0, 1 \\ 
                  B & 0, 0, 1 & 0, 0, 1 \\
                 \end{tabular}
                 \footnotesize
                 \emph{$B$}
      \end{minipage} 
      \caption{Voting Contest}
\end{table}

Let us now see how such a game would be modelled following Definition \ref{def-game-general}. The set of strategies in this case is the same as the set of possible outcomes, i.e. $X_i = R = \{ A, B \}$. The outcome function $q \colon X_1 \times X_2 \times X_3 \to R$ is nothing more than the majority function $\operatorname{maj}: X \times X \times X \rightarrow X$, e.g. $\maj(A, B, B) = B$. It remains for us to find suitable selection functions representing the goals of the three players. Consider two order relations on $X$, call it $B \preceq' A$ and $A \preceq'' B$. It is clear that in this case the judges wish to maximise the final outcome with respect to their preferred ordering. Hence we have that the three selection functions should be
\begin{align*}
\varepsilon_1(p) = \varepsilon_2(p) &= \iargmax{\preceq'} \\
\varepsilon_3 (p)&= \iargmax{\preceq''}.
\end{align*}
Therefore, the game above would be fully described by the tuple of higher-order functionals $${\cal G} = (\iargmax{\preceq'}, \iargmax{\preceq'}, \iargmax{\preceq''}, \maj).$$

\paragraph{The Keynesian variant.}

Let us now reconsider the beauty contest from the introduction. The first judge $J_1$ still ranks the candidates according to a preference ordering $B \preceq A$. The second and third judges, however, have no preference relations over the candidates per se, but want to vote for the winning candidate. They are Keynesian players! 

As shown in the introduction, it is perfectly possible to model such game via standard payoff matrices, and Table \ref{tab:voting_2} presents such an encoding. If there is a majority for a candidate and player $J_2$ or $J_3$ vote for the majority candidate they will get a certain utility, say 1. If they vote for another candidate, their utility is lower, say 0.  

\begin{table}[!htb]\label{tab:voting_2}
      \begin{minipage}{.5\linewidth}
                  \centering
                  \begin{tabular}{l|ll}
                    & A & B   \\
                    \hline
                  A & 1, 1, 1 & 1, 0, 1 \\
                  B & 1, 1, 1 & 0, 1, 0 
                  \end{tabular}
                  \footnotesize
                  \emph{$A$} 
      \end{minipage}%
      \begin{minipage}{.5\linewidth}
                 \centering
                 \begin{tabular}{l|ll}
                    & A & B  \\
                    \hline
                  A & 1, 1, 0 & 0, 1, 1 \\ 
                  B & 0, 0, 1 & 0, 1, 1 \\
                 \end{tabular}
                 \footnotesize
                 \emph{$B$}
      \end{minipage} 
      \caption{Voting Contest}
\end{table}

Note, however, that in the process of attaching utilities to the strategies, one has to compute the outcome of the votes, then check for the second and the third player whether their vote is in line with the outcome, and finally attach the utilities. 

Let us now contrast this with the higher-order modelling of games. First note that from the game ${\cal G}$ of the previous example, only the ``motivation" of players 2 and 3 have changed. Accordingly, we will only need to adjust their selection functions so as to capture their new goal which is to vote for the winner of the contest. Such goal is exactly captured by equipping $J_2$ and $J_3$ with the \emph{fixpoint selection function} $\fix : (X \to X) \to \mathcal P (X)$, defined in Section \ref{sec:quantifier_preference_util}. Note that it is neither necessary to change the structure of the game nor to manually compute anything. The new game with the two Keynesian judges is directly described by the tuple
$${\cal G}_{\textup K} = (\iargmax{\preceq}, \fix, \fix, \maj).$$
One can say that in the higher-order modelling of games we have equipped the individual players themselves with the problem solving ability that we used to compute the payoff matrices such that they represent the motivations of the Keynesian players.
These fixpoint agents with their computational power resemble a construction that is at the core of the 
Lucas critique.\footnote{Sargent \cite{Sargent1993} describes the need for a similarity of the economist and the economically reasoning agents 
in the economists' models as follows:
{\it  ``[t]he idea of rational expectations is ... said to embody the idea that economists 
and the agents they are modeling should be placed on the  equal footing: 
the agents in the model should be able to forecast and profit-maximize 
and utility-maximize as well as the economist - or should we say the econometrician - who constructed the model."}}
\subsection{Meeting in New York} 

Consider now the game where two strangers, call them $1$ and $2$, want to meet in New York. Suppose there are two places which they consider as meeting points, Grand Central Terminal ($G$) and the Empire State Building ($E$). Both players have to choose simultaneously, and they would only be happy if they pick the same meeting place. Table \ref{tab:Meeting_New_York} depicts a payoff matrix representing this game \cite{schelling1960strategy,Mas-Colell1995}. 
\begin{table}[!htb]
\begin{center}
\begin{tabular}{|c|c|c|c|l|}\hline
Strategy 	& $E$   	& $G$      \\ \hline \hline
$E$		& 1,1      	& 0,0    \\ \hline
$G$		& 0,0       	& 1,1    \\ \hline
\end{tabular}
\end{center}
\label{tab:Meeting_New_York}
\caption{Meeting in New York}
\end{table}

Notice that the original motivation of the players that is encoded in the numerical payoffs above is actually qualitative: Their only goal is to coordinate. Numeric values such as $1$ and $0$ encode whether they succeed or not. 

Let us consider a possible higher-order modelling of such game and see how it is able to directly express the motivation of the players. Let $C = \{E , G\}$ be the set of choices.  To begin with, the strategy spaces of the two players are  $X_1 = X_2 = C$. As the set of outcomes let us take the possible locations that the two players chose, i.e. $R = C \times C$, so an element of $R$ is a pair where the first coordinate tells what the first player chose, and the second tells what the second player chose. Now the outcome function $q : X_1 \times X_2 \to R$ is simply the identity function ${\rm id} \colon C \times C \to C \times C$.

Finally, to model the coordination goals of both player we can use a variant of the fixed point selection function $\fix_i : (C \to C \times C) \to {\mathcal P}(C)$, for $i \in \{1, 2\}$, defined as
\[\fix_i(p)= \begin{cases}
\{ x \in X \mid (\pi_i \circ p)(x) = x \} &\text{ if $(\pi_i \circ p)(x) = x$ for some $x \in X$ } \\
X &\text{ otherwise}.
\end{cases} \]
The game then has a higher-order description as the triple ${\cal G}_{\textup NY} = (\fix_2, \fix_1, {\rm id})$. 

\subsection{Matching Pennies} 
Two players, call them $1$ and $2$, play by hiding a penny in their hand. Each player secretly turns the penny such that heads ($H$) or tails ($T$) is facing up. Then, both players simultaneously reveal their penny. The two players' payoffs are such that player $1$ wins if both pennies match, player $2$ wins if they do not match. In this game there is a clear quantitative interpretation of payoffs: the player who loses gives his penny to the winner. Hence, the winner gains one penny while the other player loses one penny. Table \ref{tab:Matching_pennies} shows how such a game would be normally encoded into a payoff matrix. 
\begin{table}[!htb]
\begin{center}
\begin{tabular}{|c|c|c|c|l|}\hline
Strategy 	& $H$	& $T$      \\ \hline \hline
$H$		& 1,-1        & -1,1    		\\ \hline
$T$    	& -1,1        & 1,-1 \\ \hline
\end{tabular}
\end{center}
\caption{Matching Pennies}
\label{tab:Matching_pennies}
\end{table}

Payoffs thus characterise the monetary gains and losses of players. According to this interpretation the numerical payoffs come very natural. But there are also alternative interpretations of this game, such as penalty kicks in soccer, where the payoffs just capture qualitative information about who loses and who wins. In this case what we want to model as the players' goals is \emph{coordination} and \emph{differentiation}, respectively. 

Let us see how to model this in the higher-order game framework. Note that the first player wants to coordinate with the second player whereas the second player wants to differentiate from player 1. Similarly to Meeting in New York, player 1 can be modelled with a fixed point selection function. We let the set of outcomes be $R = X_1 \times X_2 = \{ H, T \}^2$, representing the actual choice made by the two players. Again the outcome function is the identity function:
\[ q(x_1,x_2) = {\rm id}(x_1, x_2) = (x_1,x_2).\]
As in the previous example, the first player, who wishes to coordinate the moves, is best modelled by the fixed point selection function
\[ \fix (p) = \{ x_1 \in X_1 \,\mid\, x_1 =\pi_2(p(x_1)) \}. \]
Just as the fixpoint selection function models coordination, so there is a `non-fixpoint' selection function which models differentiation or anti-coordination. The set of non-fixpoints in this game is 
\[ \nonfix(p)= \{ x_2 \in X_2 \mid x_2 \neq \pi_1(p(x_2)) \}. \]
So the whole game is modelled by the triple of higher order functions
\[ {\cal G}_{\textup MP} = (\fix, \nonfix, {\rm id}). \]

Coordination and anti-coordination are motifs that play an important role in general. Often, as for instance in the Game of Chicken, the situation we actually want to represent is qualitative. Fixpoints and non-fixpoints are a high-level description of such motivations.   
\subsection{Battle of the Sexes} 

A couple has agreed to meet but they do not agree whether they should be attending the ballet ($B$) or a football match ($F$). As far as individual preferences go, the husband prefers football over ballet, while the wife prefers ballet over football. But irrespective of their personal preferences, they would of course rather be together than by themselves in two different places. The set of strategies contain the two possible choices they can make $X_h = X_w =
\{B,F\}$. This is a variant of the classical Battle of the Sexes game. It is normally modelled as having an outcome $R = \RR \times \RR$ capturing the utility of the husband and wife after their individual choices are made. This game is represented via the payoff matrix of Table \ref{tab:BattleofSexes-classic}. 

\begin{table}[t!]
\begin{center}
\begin{tabular}{|c|c|c|c|l|}\hline
Strategy 	& $B$   	& $F$      	\\ \hline \hline
$B$ 	    	& 3,2       	& 1,1     	\\ \hline
$F$ 		& 0,0       	& 2,3        	\\ \hline
\end{tabular}
\end{center}
\caption{Battle of the Sexes}
\label{tab:BattleofSexes-classic}
\end{table}

In the classical representation, both players maximise their corresponding coordinate of the outcome tuple $(r_w, r_h) \in \RR \times \RR$, i.e. the wife wants to maximise $r_w$ whereas the husband would like to maximise $r_h$. 

Let us again consider an alternative modelling of the game using selection functions. The choices of moves are still $X_w = X_h = \{ B, F \}$, but now we take as the set of outcomes a description of what actually happens, namely, who goes to which event. We set $R = X_w \times X_h = \{ B, F \} \times \{ B, F \}$. Hence, an element of $R$ is a pair where the first coordinate denotes the choice of the wife and the second coordinate denotes the choice of the husband. Here again, we can take the outcome function $q : X_w \times X_h \to R$ to be the identity function.

We will build the selection functions for each player in a compositional way, by observing that each player has a lexicographic preference: their first priority is to be coordinated, and all else being equal, their second priority is to go to their favourite event (be it ballet or football). We describe an element of $R$ as coordinated if its first coordinate equals its second coordinate, so the coordinated outcomes in this game are $(B,B)$ and $(F,F)$. We can define a selection function $\varepsilon_c$ which chooses all moves that lead to a coordinated outcome as follows:
\[ \varepsilon_c (p) = \{ x \; | \; (\pi_1 \circ p)(x) = (\pi_2 \circ p)(x) \}). \]
Next we have a pair of selection functions $\varepsilon_b, \varepsilon_f$ representing the purely selfish aims of attending ballet and football respectively:
\[ \varepsilon_{B} (p) = \{ x \; | \; (\pi_1 \circ p)(x) = B \} \]
\[ \varepsilon_{F} (p) = \{ x \; | \; (\pi_2 \circ p)(x) = F \} \]
We are now ready to build our players' selection functions compositionally. Given a context, the joint selection function checks whether there are any moves which satisfy both personalities given by the coordinating and selfish selection functions. If so, the joint selection function returns those moves. If there are no moves satisfying both then the coordination takes priority, and the selfish aspect is ignored. Therefore the wife's selection function is
\[ \varepsilon_w (p) = \begin{cases}
\varepsilon_c (p) \cap \varepsilon_{B} (p) & \text{ if nonempty } \\
\varepsilon_c (p) & \text{ otherwise }
\end{cases} \]
and the husband's selection function is
\[ \varepsilon_h (p) = \begin{cases}
\varepsilon_c (p) \cap \varepsilon_{F} (p) & \text{ if nonempty } \\
\varepsilon_c (p) & \text{ otherwise }
\end{cases} \]
The whole game is then fully described by the triple ${\cal G} = (\varepsilon_w, \varepsilon_h, {\rm id})$.

Observe that in building these selection functions we have \emph{not} made use of the assumption that the game's outcome function is the identity function. The selection functions will still describe the intrinsic motivations of the players even if we change the rules of the game. For example, say that the couple have an agreement that if the husband ever goes to the football match and the wife is not there, then he has to make his way to the ballet instead and meet her there. That changes the outcome function of the game as
\[ q(x_w, x_h) = \begin{cases}
(x_w, B) & \text{ if $x_w = B$ and $x_h = F$} \\
(x_w, x_h) & \text{ otherwise.}
\end{cases} \]
This new game is modelled by the triple ${\cal G} = (\varepsilon_w, \varepsilon_h, q)$.

Although this is a trivial example, we believe that this method of modelling will distinguish itself in its ability to scale easily to very complex situations. A realistic player may have many competing aims, some of these might be best modelled by closed or non-closed selection functions (for example immediate profit, long-term profit, fairness concerns, environmental concerns). Using selection functions allows us to treat each aim individually, and then afterwards combine them (with rules for breaking ties, such as the lexicographic rule in this example) into a realistic description of the player.

\section{Higher-Order Equilibria}\label{sec:equilibria}

In the following we turn to equilibrium concepts. We will introduce two equilibrium definitions, one based on quantifiers and one based on selection functions, both in the spirit of classical Nash equilibria. 

\subsection{Quantifier Equilibrium}

Consider a game with $n$ players, and a strategy profile $\mathbf x \in \prod_{i=1}^n X_i$. Given an outcome function $q \colon \mathbf x \in \prod_{i=1}^n X_i \to R$, the game outcome resulting from this choice of strategy profile is $q (\mathbf x)$. We can describe the \emph{game context} in which player $i$ unilaterally changes his strategy as
\[ 
\mathcal U^q_i (\mathbf x)(x_i') = q (\mathbf x [i \mapsto x_i'])
\]
where $\mathbf x [i \mapsto x_i']$ is the tuple obtained from $\mathbf x$ by replacing the  $i^{th}$ entry of the tuple $\mathbf x$ with $x_i'$. Note that indeed $\mathcal U^q_i(\mathbf x)$ has type $X_i \to R$, the appropriate type of a game context for player $i$.

We call the $n$ functions $\mathcal U^q_i$ ($1 \leq i \leq n$) the \emph{unilateral maps} of the game. They were introduced in \cite{hedges13} in which it is shown that the proof of Nash's theorem amounts to showing that the unilateral maps have certain topological (continuity and closure) properties. The concept of a context was introduced later in \cite{hedges14a}, so now we can say that $\mathcal U^q_i (\mathbf x) : X_i \to R$ is the game context in which the  $i^{th}$ player can unilaterally change his strategy, so we call it a unilateral context.

Using this notation we can abstract the classical definition of Nash equilibrium to our framework. 

\begin{definition}[Quantifier equilibrium] \label{def-gen-nash} Given a game ${\cal G} = (\varepsilon_1, \ldots, \varepsilon_n, q)$, we say that a strategy profile $\mathbf x \in \prod_{i=1}^n X_i$ is in \emph{quantifier equilibrium} if \[ 
q (\mathbf x) \in \overline{\varepsilon_i} (\mathcal U^q_i (\mathbf x)) 
\]
for all players $1 \leq i \leq n$.
\end{definition}

As with the usual notion of Nash equilibrium, we are also saying that a strategy profile is in quantifier equilibrium if no player has a motivation to unilaterally change their strategy. This is expressed formally by saying that preferred outcomes, specified by the selection function when applied to the unilateral context, contain the outcome obtained by sticking with the current strategy. 

For illustration, we now want to give calculations of the quantifier equilibria for the voting contest game  $${\cal G} = (\iargmax{\preceq'}, \iargmax{\preceq'}, \iargmax{\preceq''}, \maj).$$ as described in Section \ref{sec:keynesian-beauty-contest} in the notation of quantifiers and unilateral contexts. We look at two possible strategy profiles: $BBB$ and $BBA$.

We claim that $BBB$ is in quantifier equilibrium. Note that $BBB$ has outcome $\maj(BBB)=B$. Let us verify this for player 1. The unilateral context of player 1 is
$$\mathcal U^{\maj}_1 (BBB)(x) = \maj(xBB) = B,$$
meaning that in the given context the outcome is $B$ no matter what player 1 chooses to play. The maximisation quantifier applied to such a unilateral context gives
$$\overline{\varepsilon_1}(\mathcal U^{\maj}_1 (BBB)) = \imax{\succeq_1}(p)(\mathcal U^{\maj}_1 (BBB))=\{B\},$$
meaning that, in the given context, player 1's preferred outcome is $B$. Hence, we can conclude by $\maj(BBB) = B \in \{B\} = \overline{\varepsilon_1}(\mathcal U^{\maj}_1 (BBB)(x))$ 
that $B$ is a quantifier equilibrium strategy for player 1. This condition holds for each player and allows us to conclude that $BBB$ is a quantifier equilibrium.

On the other hand, we show that $BBA$ is not in quantifier equilibrium. We have that
$$\maj(BBA) = B \notin \{A\} = \overline{\varepsilon_1}(\mathcal U^{\maj}_1 (BBA)).$$
since $\mathcal U^{\maj}_1 (BBA)(x) = \maj(xBA) = x$. In other words, the strategy profile $BBA$ gives rise to a game context $\mathcal U^{\maj}_1 (BBA)(x)$ where player $1$ has an incentive to change his strategy to $A$, so that the new outcome $\maj(ABA) = A$ is better than the previous outcome $B$.

This game has three quantifier equilibria: $\{AAA,AAB,BBB\}$. They are exactly the same as the Nash equilibria in the normal form representation (cf. Table \ref{tab:classical}). We will
discuss this coincidence in more detail in Section \ref{sec:relation_eq}.

\subsection{Selection Equilibrium}\label{sec:context_Nash}

The definition of quantifier equilibrium is based on quantifiers. However, we can also use selection functions directly to define an equilibrium condition.

\begin{definition}[Selection equilibrium] \label{def-gen-nash-context} Given a game ${\cal G} = (\varepsilon_1, \ldots, \varepsilon_n, q)$, we say that a strategy profile $\mathbf x \in \prod_{i=1}^n X_i$ is in \emph{selection equilibrium} if \[ 
x_i \in \varepsilon_i (\mathcal U^q_i (\mathbf x)) 
\]
for all players $1 \leq i \leq n$, where $x_i$ is the  $i^{th}$ component of the tuple $\mathbf x$.
\end{definition}

As in the previous subsection, let us illustrate the concept above using our classical voting contest from Section \ref{sec:keynesian-beauty-contest}. The set of selection equilibria is $\{AAA,AAB,BBB\}$, the same as the set of quantifier equilibria. We will give examples where the two notions differ in Section \ref{sec:relation_ex}.

Consider ${BBB}$ and the rationale for player $1$. As seen above, his unilateral context is
\[\mathcal U^{\maj}_1 (BBB)(x) = \maj(xBB) = B.\]
Hence, given this game context his selection function calculates
\[ \varepsilon_1 (\mathcal U^{\maj}_1 (BBB))  =\{ B\}\] 
As before, given that he is not pivotal, an improvement by switching votes is not possible. The same condition holds analogously for the other players. 

Let us now investigate the strategy profile ${BBA}$. The unilateral context is 
\[ \mathcal U^{\maj}_1 (BBA)(x) = \maj(xBA) = x.\]
Given this context, the selection function tells us that player $1$ would switch to $A$:
\[\varepsilon_1(\mathcal U^{\maj}_1 (BBA)) = \{A\}.\]
Hence, ${BBA}$ is not a selection equilibrium.

\section{Relationship Between Equilibria}

Our goal in this section is to show that selection equilibrium is a strict refinement of quantifier equilibrium. Moreover, we will show that for closed selection functions the two notions coincide. The obvious question then arises: which concept is more reasonable when games involve non-closed selection functions? We will provide several examples based on the voting contest to argue that in such cases selection equilibrium is the adequate concept. 

\subsection{Selection Refines Quantifier Equilibrium}\label{sec:relation_eq}


\begin{theorem} \label{eq-inclusion} Every selection equilibrium is a quantifier equilibrium.
\end{theorem}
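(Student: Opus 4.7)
The plan is to unfold the definitions and observe that the selection equilibrium condition directly implies the quantifier equilibrium condition via the definition of $\overline{\varepsilon_i}$.

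First I would fix a strategy profile $\mathbf{x} \in \prod_{i=1}^n X_i$ that is in selection equilibrium, so that for each player $i$ we have $x_i \in \varepsilon_i(\mathcal{U}^q_i(\mathbf{x}))$. Next I would apply Definition \ref{over-sel} to the selection function $\varepsilon_i$ at the context $p = \mathcal{U}^q_i(\mathbf{x})$. By that definition, every element of the form $p(x_i)$ with $x_i \in \varepsilon_i(p)$ belongs to $\overline{\varepsilon_i}(p)$; hence $\mathcal{U}^q_i(\mathbf{x})(x_i) \in \overline{\varepsilon_i}(\mathcal{U}^q_i(\mathbf{x}))$.

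The final step is to compute $\mathcal{U}^q_i(\mathbf{x})(x_i)$. By the definition of the unilateral map, $\mathcal{U}^q_i(\mathbf{x})(x_i) = q(\mathbf{x}[i \mapsto x_i]) = q(\mathbf{x})$, since replacing the $i^{\text{th}}$ entry of $\mathbf{x}$ with itself leaves $\mathbf{x}$ unchanged. Substituting this identity into the previous inclusion gives $q(\mathbf{x}) \in \overline{\varepsilon_i}(\mathcal{U}^q_i(\mathbf{x}))$ for each $i$, which is exactly the quantifier equilibrium condition of Definition \ref{def-gen-nash}.

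There is no real obstacle here; the content of the theorem is essentially a tautology after the definitions are unpacked, and the whole argument is two lines. The substantive mathematics lies not in this inclusion but in the question of when it is strict, which the paper presumably addresses next by exhibiting a quantifier equilibrium that fails to be a selection equilibrium (necessarily involving a non-closed selection function, in view of the preceding discussion around Proposition \ref{prop:closure}).
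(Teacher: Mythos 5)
Your proof is correct and is essentially identical to the paper's: both unfold the selection equilibrium condition, apply Definition \ref{over-sel} to obtain $\mathcal{U}^q_i(\mathbf{x})(x_i) \in \overline{\varepsilon_i}(\mathcal{U}^q_i(\mathbf{x}))$, and conclude via the observation that $\mathbf{x}[i \mapsto x_i] = \mathbf{x}$, hence $\mathcal{U}^q_i(\mathbf{x})(x_i) = q(\mathbf{x})$. No gaps.
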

\begin{proof} Recall that by definition, for every context $p$ we have 
\[ x \in \varepsilon_i (p)\implies p(x) \in \overline{\varepsilon_i} (p) \]
since $\overline{\varepsilon_i} (p) = \{ p (x) \mid x \in \varepsilon_i (p) \}$. Assuming that $\mathbf x$ is a selection equilibrium we have
\[ x_i \in \varepsilon_i (\mathcal U^q_i (\mathbf x)) \]
Therefore
\[ \mathcal U^q_i (\mathbf x) (x_i) \in \overline{\varepsilon_i} (\mathcal U^q_i (\mathbf x)) \]
It remains to note that $\mathcal U^q_i (\mathbf x) ( x_i) = q (\mathbf x)$, because $\mathbf x [i \mapsto x_i] = \mathbf x$.
\end{proof}

However, for closed selection functions the two notions coincide:

\begin{theorem} \label{theorem-closed} If $\varepsilon_i = \overline{\overline{\varepsilon_i}}$, for $1 \leq i \leq n$, then the two equilibrium concepts coincide. 
\end{theorem}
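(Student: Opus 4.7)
The plan is to combine Theorem \ref{eq-inclusion} (selection $\Rightarrow$ quantifier) with a direct converse argument that consumes the hypothesis $\varepsilon_i = \overline{\overline{\varepsilon_i}}$. The forward direction is already in hand, so the real task is to show that under closedness every quantifier equilibrium is in fact a selection equilibrium. Then the two sets of equilibria coincide, which is exactly what the theorem asserts.

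Suppose $\mathbf{x}$ is a quantifier equilibrium, so for each player $i$ we have $q(\mathbf{x}) \in \overline{\varepsilon_i}(\mathcal{U}^q_i(\mathbf{x}))$. Unpacking Definition \ref{over-sel}, this means there exists some $x' \in \varepsilon_i(\mathcal{U}^q_i(\mathbf{x}))$ with $\mathcal{U}^q_i(\mathbf{x})(x') = q(\mathbf{x})$. On the other hand, by construction of the unilateral map, $\mathcal{U}^q_i(\mathbf{x})(x_i) = q(\mathbf{x}[i \mapsto x_i]) = q(\mathbf{x})$. Hence $x'$ and $x_i$ are indistinguishable under the context $\mathcal{U}^q_i(\mathbf{x})$, that is, they produce exactly the same outcome.

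At this point I invoke the preceding proposition that $\varepsilon_i$ is closed if and only if $\varepsilon_i = \overline{\overline{\varepsilon_i}}$. By hypothesis $\varepsilon_i$ is closed, so applying closedness with context $p = \mathcal{U}^q_i(\mathbf{x})$ and the two points $x'$ and $x_i$ (which satisfy $p(x') = p(x_i)$ and $x' \in \varepsilon_i(p)$) yields $x_i \in \varepsilon_i(\mathcal{U}^q_i(\mathbf{x}))$. Since this holds for every player $i$, the profile $\mathbf{x}$ is a selection equilibrium, completing the converse inclusion.

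The main obstacle is essentially bookkeeping rather than conceptual: the whole argument is a short unwinding of definitions. The one delicate point is spotting that the witness $x'$ supplied by $q(\mathbf{x}) \in \overline{\varepsilon_i}(\mathcal{U}^q_i(\mathbf{x}))$ yields the same outcome as $x_i$ itself, which is precisely the trigger needed for closedness to promote $x'$-membership to $x_i$-membership in $\varepsilon_i(\mathcal{U}^q_i(\mathbf{x}))$.
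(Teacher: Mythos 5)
Your proof is correct and follows essentially the same route as the paper's: both hinge on observing that $\mathcal U^q_i(\mathbf x)(x_i) = q(\mathbf x)$ and then converting membership of $q(\mathbf x)$ in $\overline{\varepsilon_i}(\mathcal U^q_i(\mathbf x))$ into membership of $x_i$ in $\varepsilon_i(\mathcal U^q_i(\mathbf x))$ via the hypothesis. The only (cosmetic) difference is that the paper applies the definition of $\overline{\overline{\varepsilon_i}}$ directly to conclude $x_i \in \overline{\overline{\varepsilon_i}}(\mathcal U^q_i(\mathbf x))$, whereas you detour through the equivalent closedness property by extracting a witness $x'$ with the same outcome as $x_i$.
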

\begin{proof} Given the previous theorem, it remains to show that under the assumption $\varepsilon_i = \overline{\overline{\varepsilon_i}}$ any strategy profile ${\mathbf x}$ in quantifier equilibrium is also in selection equilibrium. Fix $i$ and suppose ${\mathbf x}$ is such that
\[
q (\mathbf x) \in \overline{\varepsilon_i} (\mathcal U^q_i (\mathbf x)). 
\]
Since $\mathcal U^q_i (\mathbf x) ( x_i) = q (\mathbf x)$, we have
\[
\mathcal U^q_i (\mathbf x)(x_i) \in \overline{\varepsilon_i} (\mathcal U^q_i (\mathbf x)). 
\]
By the definition of $\overline{\overline{\varepsilon_i}}$ it follows that
\[
x_i \in \overline{\overline{\varepsilon_i}}(\mathcal U^q_i (\mathbf x)). 
\]
Therefore, since $\varepsilon_i = \overline{\overline{\varepsilon_i}}$, we obtain $x_i \in \varepsilon_i(\mathcal U^q_i (\mathbf x))$.
\end{proof}

The theorem above explains why in all the examples from the last section the strategy profiles that were quantifier equilibrium were the same as those in selection equilibrium. All the examples can be modelled with closed selection functions. Moreover, since $\argmax$ can be easily shown to be closed, in the classical modelling of games via maximising players, our two notions of equilibrium also coincide. The following theorem shows that they both indeed also coincide with the standard notion of Nash equilibrium.

\begin{theorem} \label{cor-classical} In a classical game (see Remark \ref{normal-form-remark}) the standard definition of Nash equilibrium and the equilibrium notions of Definitions \ref{def-gen-nash} and \ref{def-gen-nash-context} are equivalent. 
\end{theorem}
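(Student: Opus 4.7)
The plan is to reduce the theorem to Theorem \ref{theorem-closed} combined with a straightforward unfolding of definitions. The first step is to verify that in a classical game the selection functions $\varepsilon_i = \iargmax{i}$ are closed: if $x \in \iargmax{i}(p)$ and $p(x) = p(x')$, then $\pi_i(p(x')) = \pi_i(p(x))$ equals the maximum of $\pi_i \circ p$ on $X_i$, so $x' \in \iargmax{i}(p)$ as well. By Theorem \ref{theorem-closed} this already yields the coincidence of the quantifier equilibrium and selection equilibrium for classical games, so it remains only to show that either one is equivalent to the standard Nash equilibrium condition.

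For the second step I would work with the selection-equilibrium condition, since it unfolds most directly to the Nash inequality. Fix a strategy profile $\mathbf{x}$ and a player $i$. The condition $x_i \in \iargmax{i}(\mathcal U^q_i(\mathbf{x}))$ expands, by the definition of $\iargmax{i}$, to
\[
\pi_i\bigl(\mathcal U^q_i(\mathbf{x})(x_i)\bigr) \,\geq\, \pi_i\bigl(\mathcal U^q_i(\mathbf{x})(x_i')\bigr) \qquad \text{for all } x_i' \in X_i.
\]
Substituting $\mathcal U^q_i(\mathbf{x})(x_i') = q(\mathbf{x}[i \mapsto x_i'])$, identifying $\pi_i \circ q$ with the $i$-th payoff function $q_i$ as in Remark \ref{normal-form-remark}, and noting that $\mathbf{x}[i \mapsto x_i] = \mathbf{x}$, the displayed inequality becomes
\[
q_i(\mathbf{x}) \,\geq\, q_i(\mathbf{x}[i \mapsto x_i']) \qquad \text{for all } x_i' \in X_i,
\]
which is exactly the standard Nash condition for player $i$. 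Universally quantifying over $i$ establishes the desired equivalence.

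I do not foresee any real obstacle: the entire argument is a specialisation of Theorem \ref{theorem-closed} followed by syntactic unfolding of the unilateral map and the $\argmax$ operator. The only substantive ingredient beyond Theorem \ref{theorem-closed} itself is the closedness of $\iargmax{i}$, which is immediate from the definition. If one preferred to sidestep Theorem \ref{theorem-closed}, the same translation applied to the quantifier equilibrium condition would also work, using the identity $\overline{\iargmax{i}} = \imax{i}$ together with the same projection-and-unilateral-map calculation to recover the Nash inequality in the form $q_i(\mathbf{x}) \geq q_i(\mathbf{x}[i \mapsto x_i'])$.
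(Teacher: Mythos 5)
Your proposal is correct and follows essentially the same route as the paper: the closedness of $\iargmax{i}$ (invoked via Theorem \ref{theorem-closed}) reduces the problem to one of the two equilibrium notions, and the selection-equilibrium condition is then unfolded through the unilateral map and the $i$-th projection to recover the standard Nash inequality. Your write-up is in fact slightly more careful than the paper's, which elides the projection $\pi_i$ when writing $\max_{x \in X_i} q(\mathbf x[i\mapsto x])$.
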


\begin{proof}
Suppose the set of outcomes $R$ is $\RR^n$ and that the selection functions $\varepsilon_i$ are $i$-$\argmax$, i.e. maximising with respect to  $i^{th}$ coordinate. Unfolding Definition \ref{def-gen-nash-context} and that of a unilateral context $\mathcal U^q_i (\mathbf x)$, we see that a tuple $\mathbf x$ is an equilibrium strategy profile if for all $1 \leq i \leq n$
\[ x_i \in i\textup{-}\argmax_{x \in X_i} q (\mathbf x [i \mapsto x]). \]
But $x_i$ is a point on which the function $p(x) = q (\mathbf x [i \mapsto x])$ attains its maximum precisely when $p(x_i) \in \max_{x \in X_i} p(x)$. Hence 
\[ q(\mathbf x) = q (\mathbf x [i \mapsto x_i]) = p(x_i) = \max_{x \in X_i} p(x) = \max_{x \in X_i} q (\mathbf x [i \mapsto x]) \]
which is the standard definition of a Nash equilibrium: for each player $i$, the outcome obtained by not changing the strategy, i.e. $q(\mathbf x)$, is the best possible amongst the outcomes when any other available strategy is considered, i.e. $\max_{x \in X_i} q (\mathbf x [i \mapsto x])$. 
\end{proof}

Theorem \ref{cor-classical} above shows that in the case of classical games the usual concept of a Nash equilibrium coincides with both the quantifier equilibrium and the selection equilibrium. On the other hand, for general games, Theorem \ref{eq-inclusion} proves that every selection equilibrium is a quantifier equilibrium
\[
\text{selection equilibria} \subsetneq\text{quantifier equilibria}
\]
In the following subsection we give several examples showing that the inclusion above is strict, i.e. that there are games where selection equilibrium is a strict refinement of quantifier equilibrium. By Theorem \ref{theorem-closed} these examples necessarily make use of players modelled by non-closed selection functions.

\subsection{Examples of Equilibria}\label{sec:relation_ex}

\label{sec:fixpoint}

Now we will explore the distinction between quantifier equilibria and selection
equilibria in more detail and we also relate it to the standard approach based on Nash equilibria. 

\paragraph{Keynesian beauty contest.}

We have discussed the representation of this game both in normal form as well as in higher-order functions. Here, we will turn to analyzing the equilibria of its higher-order representation
$${\cal G}_{\textup K} = (\iargmax{\preceq}, \fix, \fix, \maj).$$
from Section \ref{sec:keynesian-beauty-contest}. We begin with quantifier equilibria (see Table \ref{tab:fixfix}). These include the strategy profiles where players $J_2$ and $J_3$ are both coordinated but also profiles where either $J_2$ or $J_3$ is in the minority.

We illustrate the rationale for the strategy profile $AAB$ of the Keynesian player 3.
The outcome of $AAB$ is $\maj(AAB)=A$. The unilateral context of player 3 is 
\[ \mathcal U^{\maj}_3 (AAB)(x)=\operatorname{maj}(AAx)=A \]
meaning that the outcome is (still) $A$ if player 3 unilaterally changes from $B$ to $A$.
The fixed point quantifier applied to this context gives 
\[ \overline{\varepsilon_3}(\mathcal U^{\maj}_3 (AAB)) = \fix(\mathcal U^{\maj}_3 (AAB)) = \{A\} \]
meaning that $A$ is the outcome resulting from an optimal choice.
Hence, we can conclude by 
\[ \maj(AAB) = A \in \{A\} = \overline{\varepsilon_3}(\mathcal U^{\maj}_3 (AAB)) \]
that player 3 is happy with his choice of move $B$ according to the quantifier equilibrium notion. This already demonstrates the problem with the quantifier equilibrium notion, since player 1 has voted for $B$ but $A$ is the winner, so he should not be happy at all!

Now, let us turn to the selection equilibria. Table \ref{tab:fixfix} also contains the selection equilibria and it shows that they are a strict subset of the quantifier equilibria. 

\begin{table}[th!]
\begin{center}
\begin{tabular}{|c|c||c|c||c|c|}\hline
Strategy 	& Outcome 	& Quantifier 	Eq. 	    & Defects	& Selection Eq.	        & Defects 		\\ \hline \hline
$AAA$ 	    & $A$      	&  \checkmark   &		    & \checkmark		& 				\\ \hline
$AAB$      	& $A$      	&  \checkmark   &		    & -					& $J_3$			\\ \hline
$ABA$ 		& $A$ 		&  \checkmark	&		    & -					& $J_2$			\\ \hline
$ABB$ 		& $B$ 		&  \checkmark   &		    & \checkmark		& 				\\ \hline
$BAA$ 		& $A$ 		&  \checkmark   &		    & \checkmark		& 	 			\\ \hline
$BAB$ 		& $B$ 		&  -  		    & $J_1$	    & -					& $J_1$, $J_2$ 	\\ \hline
$BBA$ 		& $B$ 		&  - 		    & $J_1$	    & -					& $J_1$, $J_3$ 	\\ \hline
$BBB$ 		& $B$ 		&  \checkmark 	&		    & \checkmark		& 				\\ \hline
\end{tabular}
\end{center}
\caption{Players: max, fix, fix}
\label{tab:fixfix}
\end{table}

Consider again the strategy profile $AAB$, focusing on the third player. In the case of the selection equilibrium we have
$$B \notin \{A\} = \fix(\mathcal U^{\maj}_3 (AAB)) = \varepsilon_3 (\mathcal U^{\maj}_3 (AAB))$$
meaning that player 3 is not happy with his current choice of strategy $B$ with respect to the strategy profile $AAB$.

\begin{remark} Given Theorem \ref{theorem-closed} it follows immediately that $\fix \colon (X \to X) \to 2^X$ is not a closed selection function. Indeed, it is easy to calculate that
\[ \overline{\overline{\fix}}(p) = \{ x \; | \; p(x) = p(y), \mbox{ for some $y$ such that $y = p(y)$} \}, \]
i.e. $\overline{\overline{\fix}}(p)$ is the inverse image of $\fix(p)$, so it contains not only all fixed points of $p$ but also points that map through $p$ to a fixed point.
\end{remark}

The selection equilibria are precisely those in which $J_2$ and $J_3$ are coordinated, and $J_1$ is not pivotal in any of these. For illustration, consider the strategy $AAA$, which is a selection equilibrium of this game. Suppose the moves of $J_1$ and $J_2$ are fixed, but $J_3$ may unilaterally change strategy. The unilateral context is
\[ \mathcal U^{\maj}_3 (AAA) (x) = \operatorname{maj} (AAx) = A \]
Thus the unilateral context is a constant function, and its set of fixpoints is
\[ \fix (\mathcal U^{\maj}_3 (AAA)) = \{ A \} \]
This tells us that $J_3$ has no incentive to unilaterally change to the strategy $B$, because he will no longer be voting for the winner.

On the other hand, for the strategy $ABB$ the two Keynesian players are indifferent, because if either of them 
unilaterally changes to $A$ then $A$ will become the majority and they will still be voting for the winner. This is still a selection equilibrium (as we would expect) because the unilateral context is the identity function, and in particular $B$ is a fixpoint.

As a last point, let us compare the selection and quantifier equilibria of Table \ref{tab:fixfix} with the Nash equilibria in the normal form game. Table \ref{tab:voting_3} reproduces the payoff table; Nash equilibria payoffs are marked in bold. Note that the latter are the same as selection equilibria. 

The main message is that in the standard approach of modelling games via payoff matrices, it may become necessary to calculate payoffs by hand so as to model the given motivations of the player. In contrast, with selection functions, the goals of the player can be directly expressed
without any additional computation. And remarkably, the selection equilibria reproduce 
exactly the economic intuition carried by the payoffs that we encoded by hand.

\paragraph{Coordination and Anti-Coordination.}

\begin{table}[t]
      \begin{minipage}{.5\linewidth}
                  \centering
                  \begin{tabular}{l|ll}
                    & A & B   \\
                    \hline
                    A & \bf{1, 1, 1} & 1, 0, 1 \\
                    B & \bf{1, 1, 1} & 0, 1, 0 
                  \end{tabular}
                  \footnotesize
                  \emph{$A$} 
      \end{minipage}%
      \begin{minipage}{.5\linewidth}
                 \centering
                 \begin{tabular}{l|ll}
                    & A & B  \\
                    \hline
                    A & 1, 1, 0 & \bf{0, 1, 1} \\ 
                    B & 0, 0, 1 &\bf{0, 1, 1} \\
                 \end{tabular}
                 \footnotesize
                 \emph{$B$}
      \end{minipage} 
      \caption{Voting Contest}
	\label{tab:voting_3}
\end{table}

In the Meeting in New York as well as in the Matching Pennies examples, we have already seen that the fixpoint selection functions nicely capture coordination motifs and the anti-fixpoint function anti-coordination, respectively. Let us consider another variant of the voting game to reassure that selection equilibrium is the adequate concept to capture our intuition when selection functions are not closed.

We consider a game where all players want to vote for the winner of the contest. Table \ref{tab:voting_cor} represents the payoffs of this game; Nash equilibria are in bold. Clearly the only two equilibria are when all judges vote unanimously for a given contestant. Judges $J_1, J_2$ and $J_3$ want to vote for the winner, so the selection functions are given by the fixpoint operator $(X \to X) \to \power{X}$. 

\begin{table}[!htb]
      \begin{minipage}{.5\linewidth}
                  \centering
                  \begin{tabular}{l|ll}
                    & A & B   \\
                    \hline
                    A & \bf{1, 1, 1} & 1, 0, 1 \\
                    B & 0, 1, 1 & 1, 1, 0 
                  \end{tabular}
                  \footnotesize
                  \emph{$A$} 
      \end{minipage}%
      \begin{minipage}{.5\linewidth}
                 \centering
                 \begin{tabular}{l|ll}
                    & A & B  \\
                    \hline
                    A & 1, 1, 0 & 0, 1, 1 \\ 
                    B & 1, 0, 1 &\bf{1, 1, 1} \\
                 \end{tabular}
                 \footnotesize
                 \emph{$B$}
      \end{minipage} 
      \caption{Voting Contest}
      \label{tab:voting_cor}
\end{table}

As can be seen in Table \ref{tab:coordination}, the selection equilibria are exactly the coordinated strategies. This game is a good example of why quantifier equilibria are not suitable for modelling games with non-closed selection functions: it can be seen in the table that every strategy is a quantifier equilibrium of this game, but the selection equilibrium captures the economic intuition perfectly that the equilibria should be the strategy profiles that are maximally coordinated, namely $AAA$ and $BBB$.

\begin{table}[t!]
\begin{center}
\begin{tabular}{|c|c||c|c||c|c|}\hline
Strategy 	& Outcome 	& Quantifier           Eq.	 	& Defects	& Selection    Eq.  & Defects 	\\ \hline \hline
$AAA$ 	    & $A$  	    &  \checkmark    		&		    & \checkmark	& 	    	\\ \hline
$AAB$      	& $A$       &  \checkmark     		&		    & -				& $J_3$	    \\ \hline
$ABA$ 		& $A$ 		&  \checkmark      	    &		    & -				& $J_2$	    \\ \hline
$ABB$ 		& $B$ 		&  \checkmark       	&		    & -				& $J_1$	    \\ \hline
$BAA$ 		& $A$ 		&  \checkmark         	&		    & -				& $J_1$ 	\\ \hline
$BAB$ 		& $B$ 		&  \checkmark         	&		    & -				& $J_2$	    \\ \hline
$BBA$ 		& $B$ 		&  \checkmark         	&		    & -				& $J_3$ 	\\ \hline
$BBB$ 		& $B$ 		&  \checkmark         	&		    & \checkmark	& 	    	\\ \hline
\end{tabular}
\end{center}
\caption{Players: fix, fix, fix}
\label{tab:coordination}
\end{table}

\begin{table}[!thb]
      \begin{minipage}{.5\linewidth}
                  \centering
                  \begin{tabular}{l|ll}
                    & A & B   \\
                    \hline
                    A & 0, 0, 0 & \bf{0, 1, 0} \\
                    B & \bf{1, 0, 0} & \bf{0, 0, 1} 
                  \end{tabular}
                  \footnotesize
                  \emph{$A$} 
      \end{minipage}%
      \begin{minipage}{.5\linewidth}
                 \centering
                 \begin{tabular}{l|ll}
                    & A & B  \\
                    \hline
                    A & \bf{0, 0, 1} &\bf{1, 0, 0} \\ 
                    B & \bf{1, 0, 1} & 0, 0, 0 \\
                 \end{tabular}
                 \footnotesize
                 \emph{$B$}
      \end{minipage} 
      \caption{Voting Contest}
      \label{tab:voting_anti}
\end{table}

In a beauty contest, a player whose selection function is non-fixpoint is a `punk' who aims to be in a minority. To conclude, let us consider the game in which all three judges are punks (Table \ref{tab:voting_anti} represents the payoffs of this game; Table \ref{tab:anticoordination} represents the equilibria in the higher-order representation). Of course only one player can actually be in a minority, so the selection equilibria are precisely the `maximally anti-coordinated' strategy profiles, namely those in which one judge differs from the other two. This is another example of a game in which every
strategy is a quantifier equilibrium, but the selection equilibrium corresponds perfectly to our intuition.

\begin{table}[h!]
\begin{center}
\begin{tabular}{|c|c||c|c||c|c|}\hline
Strategy 	& Winner 	& Nash          Eq. 	 	& Defects	& Selection   Eq.    	& Defects       	\\ \hline \hline
$AAA$ 	    & $A$  	    &  \checkmark    		&		    & -					& $J_1$, $J_2$, $J_3$ \\ \hline
$AAB$      	& $A$      	&  \checkmark    		&		    & \checkmark		& 					\\ \hline
$ABA$ 		& $A$ 		&  \checkmark	       	&		    & \checkmark		& 					\\ \hline
$ABB$ 		& $B$ 		&  \checkmark      		&		    & \checkmark		& 					\\ \hline
$BAA$ 		& $A$ 		&  \checkmark        	&		    & \checkmark		& 					\\ \hline
$BAB$ 		& $B$ 		&  \checkmark           &		    & \checkmark		& 					\\ \hline
$BBA$ 		& $B$ 		&  \checkmark        	&		    & \checkmark		&  					\\ \hline
$BBB$ 		& $B$ 		&  \checkmark        	&		    & -					& $J_1$, $J_2$, $J_3$ \\ \hline
\end{tabular}
\end{center}
\caption{Players: non-fix, non-fix, non-fix}
\label{tab:anticoordination}
\end{table}

\section{Conclusion}
In this paper we have introduced an alternative representation of games. Quantifiers and selection functions provide an abstract and general way to describe players' goals. We have shown that, for instance in games of coordination where payoffs  are just implemented as a numerical representation of qualitative goals, higher-order functions provide a more direct way of describing these goals. What is more higher order functions formally summarise goals where utility functions do not exist - as
exemplified by the Keynesian Beauty contest. 

In an  accompanying paper on higher-order decision theory \cite{Hedges_et_al_2015_decisions} we show that quantifiers and selection functions provide a powerful tool to unify several approaches ranging from rational choice theory to behavioral alternatives. The game context, introduced in this paper,  basically summarises the game theoretic situation of each player as a decision problem. Hence, it is possible to consider behavioral alternatives such as heuristics not only in decision problems but they can be directly implemented in strategic situations as well.

In this paper we have restricted ourselves to show that a representation with higher-order functions does exist. A big advantage of our approach we have not really touched here is \emph{compositionality}. With payoffs it may be necessary to compute outcomes by hand. This is cumbersome, error prone, and slow. In particular, when one considers a change in the model such as adding another player. In contrast, selection functions can be used algebraically. There are clearly
defined ways to extend games without the need to change elements of the payoffs by hand. As the foundation of computer science, computability theory and functional programming languages, this compositionality is naturally built in higher-order functions. We believe this will be particularly helpful when analysing arbitrarily complicated and irregular games. 

The fact that there is very close connection to computation is another advantage of this framework. Our game representations are directly representable as code. In fact, we have taken advantage of this possibility and implemented a prototype in order to automatically compute the equilibria of the games in this paper.
\bibliographystyle{plain}
\bibliography{../references}


\end{document}